\newcommand{\snr}{\textrm{SNR}}
\DeclareMathAlphabet{\mathbbold}{U}{bbold}{m}{n}
\newtheorem{theorem}{Theorem}
\newtheorem{remark}{Remark}
\begin{document}
\title{On the Key Generation Rate of Physically Unclonable Functions} 


\author{%
  \IEEEauthorblockN{Yitao~Chen, Muryong~Kim, and Sriram~Vishwanath}
  \IEEEauthorblockA{ University of Texas at Austin\\
  					 Austin, TX 78701 USA\\
                     Email: \{yitaochen, muryong\}@utexas.edu, sriram@ece.utexas.edu}
}


\maketitle

\begin{abstract}
THIS PAPER IS ELIGIBLE FOR THE STUDENT PAPER AWARD.\\ 
In this paper, an algebraic binning based coding scheme and its associated achievable rate for key generation using physically unclonable functions (PUFs) is determined. This achievable rate is shown to be optimal under the generated-secret  (GS) model for PUFs.
Furthermore, a polar code based polynomial-time encoding and decoding scheme that achieves this rate is also presented.
\end{abstract}


\section{Introduction}
Physically unclonable functions (PUFs) form a promising innovative primitive that are increasingly gaining traction in the domains of authentication and secret key storage \cite{PUFintro1, PUFintro2, PUFintro3}. Instead of storing secrets in digital memory, PUFs derive a secret from the physical characteristics of an integrated circuit (IC) that form an inherent part of the device. Such a PUF can be obtained, as even though the mask and manufacturing process is relatively similar among  ICs built for a particular purpose, each IC is actually unique due to normal manufacturing variability. 

This unique behavior after manufacturing stems from a \emph{static randomness} due to technological dispersion. This static randomness was characterized by Pelgrom \cite{StaticRng}, and is known to follow a normal distribution. Unfortunately, PUF outputs are also subject to \emph{dynamic randomness} due to measurement noise, which is detrimental to the reliability of a PUF as a source for cryptographic elements. 

 In this paper, we understand the information theoretic limits of key generation using PUFs, given this static and dynamic randomness in the system.  As discussed in  \cite{PUFintro1, PUFintro2}, one of the central use-cases for PUFs is secret key generation, where this key is subsequently utilized in a variety of cryptographic algorithms.  A higher key generation rate implies greater security guarantees for the overall system, and therefore, our focus is to understand its limits, and to characterize coding schemes that approach these limits.


\subsection{Related Work and Our Contributions}
There is already a considerable body of work on combining PUFs with error correction coding schemes  to obtain reliable keys or secrets \cite{PUFintro3}.  Conventionally, these have combined BCH/RS codes with PUFs. More recently, \cite{BChen} presents simulation results on the combination of a polar code with a PUF, setting the stage for such a combination to be understood analytically. In parallel work to this paper, the authors et al. \cite{WZ} uncover the connection between PUF key generation problem and Wyner-Ziv problem \cite{WynerZiv}. And they study a nested polar codes construction scheme based on \cite{nestedPC}. 

In this work, we present a PUF key generation scheme based on a previously well studied model called {\em generated-secret} (GS) model. In \cite{GS1}, the authors present the region of achievable secret-key vs. privacy-leakage (key vs. leakage) rates for the GS model. In this paper, we show that the optimal key generation rate is achievable using algebraic binning with linear codes, and uncover the relation between PUF key generation problem and Slepian-Wolf problem. Further, we present  encoding and decoding algorithms using polar codes that achieve the optimal rate. Finally, we present simulation results to showcase the performance of our scheme.

Compared to existing literature, we find that our scheme results in a relatively straightforward interpretation of the PUF key generation problem, and results in a  key generation rate that is optimal for the GS model for PUFs. This is further expanded on in later sections of the paper.

%
%



The remainder of this paper is organized as follows. Section~\ref{sec:sys} provides the system model for PUF and formally defines the problem. Section~\ref{sec:main} shows the algebraic binning method and polar code construction achieving the maximal key generation rate. Section~\ref{sec:comp} compares our method to the existing other methods. Section~\ref{sec:sim} presents the simulation result. Finally, Section~\ref{sec:con} concludes the paper and gives future directions.

\section{System Model} \label{sec:sys}
Upper case letters represent random variables and lower case letters their realizations. A superscript denotes a vector of variables, e.g., $X^n=X_1,X_2,\ldots,X_n$, and a subscript denotes the position of a variable in a vector. Calligraphic letters such as $\mathcal{X}$ denote sets, and set sizes are written as $\lVert \mathcal{X} \rVert$. $H_b(x)=-x\log x-(1-x)\log(1-x)$ is the binary entropy function. The $*$-operator is defined as $p*x=p(1-x)+(1-p)x$. The operator $\oplus$ represents the element-wise modulo-2 summation. A binary symmetric channel (BSC) with crossover probability $p$ is denoted by BSC($p$). $Q_b(X)$ represents a binary quantizer that quantizes $X>0$ to $1$ and $X<0$ to $0$.

\begin{figure}[!hbp] 
  \centering
    \includegraphics[width=0.45\textwidth]{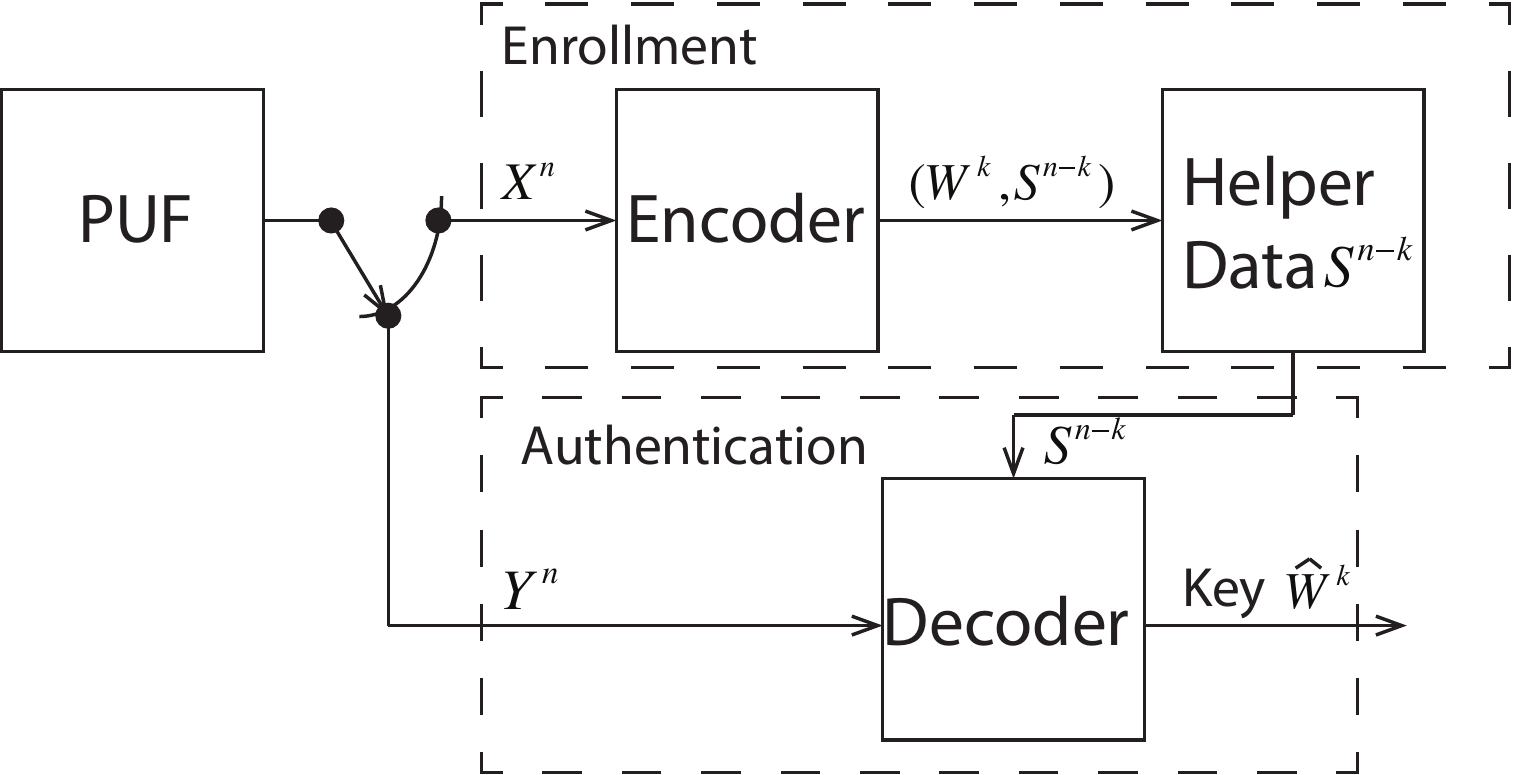}
   \caption{The GS model for PUF key generation system}
   \label{fig:1}
\end{figure}


A physically unclonable function (PUF) can be mathematically represented fairly simply as
\[
\tilde{Y} = \tilde{X} + \tilde{Z},
\]
where $\tilde{Y}$ is the PUF output, $\tilde{X}\sim \mathcal{N}(0,P)$ is the {\em static randomness}   and  $\tilde{Z} \sim \mathcal{N}(0,N)$ the {\em dynamic randomness}  independent of $X$.  As stated earlier in the introduction, $\tilde{X}$ is the desired ``signal", which is corrupted by the ``noise'' $\tilde{Z}$ when observed at the output of a PUF.

In most conventional systems today, the PUF output is quantized immediately after observation. Most models in literature assume the output  passes through a binary quantizer $Q_b(\cdot)$. It can be easily shown that $Q_b(\tilde{Y})$, $Q_b(\tilde{X})$ are distributed as $Bernoulli(0.5)$, and $Q_b(\tilde{Y})=Q_b(\tilde{X})\oplus Z'$, where $Z'$ is independent of $Q_b(\tilde{X})$, distributed as Bernoulli($p$), where $p$ is a function of $P$ and $N$. From now on, we use $Y$ to represent $Q_b(\tilde{Y})$, $X$ to represent $Q_b(\tilde{X})$ and $Z$ to represent $Z'$. Thus we have our PUF model with binary quantization as
\begin{flalign}\label{equ:1}
Y=X\oplus Z,
\end{flalign}
where $Y$ is the PUF output, $X \sim Bernoulli(0.5)$ and $Z \sim Bernoulli(p)$. Note that from the nature of PUF we do not have access to $X$. Indeed, in the real world applications, both $Y$ and $X$ are PUF outputs, which gives a different crossover probability $p*p$ in the model. But since it does not change the model (BSC), in the remaining part of this paper, we abuse the notations that we stick to the above model while both $Y$ and $X$ represent PUF outputs and $p$ is the parameter that measures the noise between two PUF outputs.

As mentioned earlier, we follow the {\em generated-secret} (GS) model for key generation of PUFs, as depicted in Figure \ref{fig:1}. For a given sequence $X^{n}$, our task is to design an encoder $\phi:\mathbb{F}_2^n :\rightarrow (\mathbb{F}_2^{n-k},\mathbb{F}_2^{k})$ that generates a {\em helper sequence} $S^{n-k}$  and a key $W^k$ and a decoder $\psi: \mathbb{F}_2^n \times \mathbb{F}_2^{n-k} :\rightarrow \mathbb{F}_2^k$ that authenticates the key. This is such that, for a particular PUF, the probability of successful authentication goes to $1$ as $n$ goes to infinity.  Let $(S^{n-k},W^k)=\phi(X^n)$ and $\hat{W}^k=\psi(Y^n,S^{n-k})$,
\begin{flalign*}
\Pr(\hat{W}^k \neq W^k) \rightarrow 0 \quad \text{as} \ n \rightarrow \infty.
\end{flalign*}

Define the key generation rate as $R=k/n$, we desire to determine the maximal key generation rate
\begin{flalign}
\max_{\phi,\psi} \ R \quad \text{s.t.}\  \Pr(\hat{W}^k \neq W^k) \rightarrow 0. 
\end{flalign}

\section{PUF System Design Using Polar Codes} \label{sec:main}
\begin{figure}[!hbp] 
  \centering
    \includegraphics[width=0.45\textwidth]{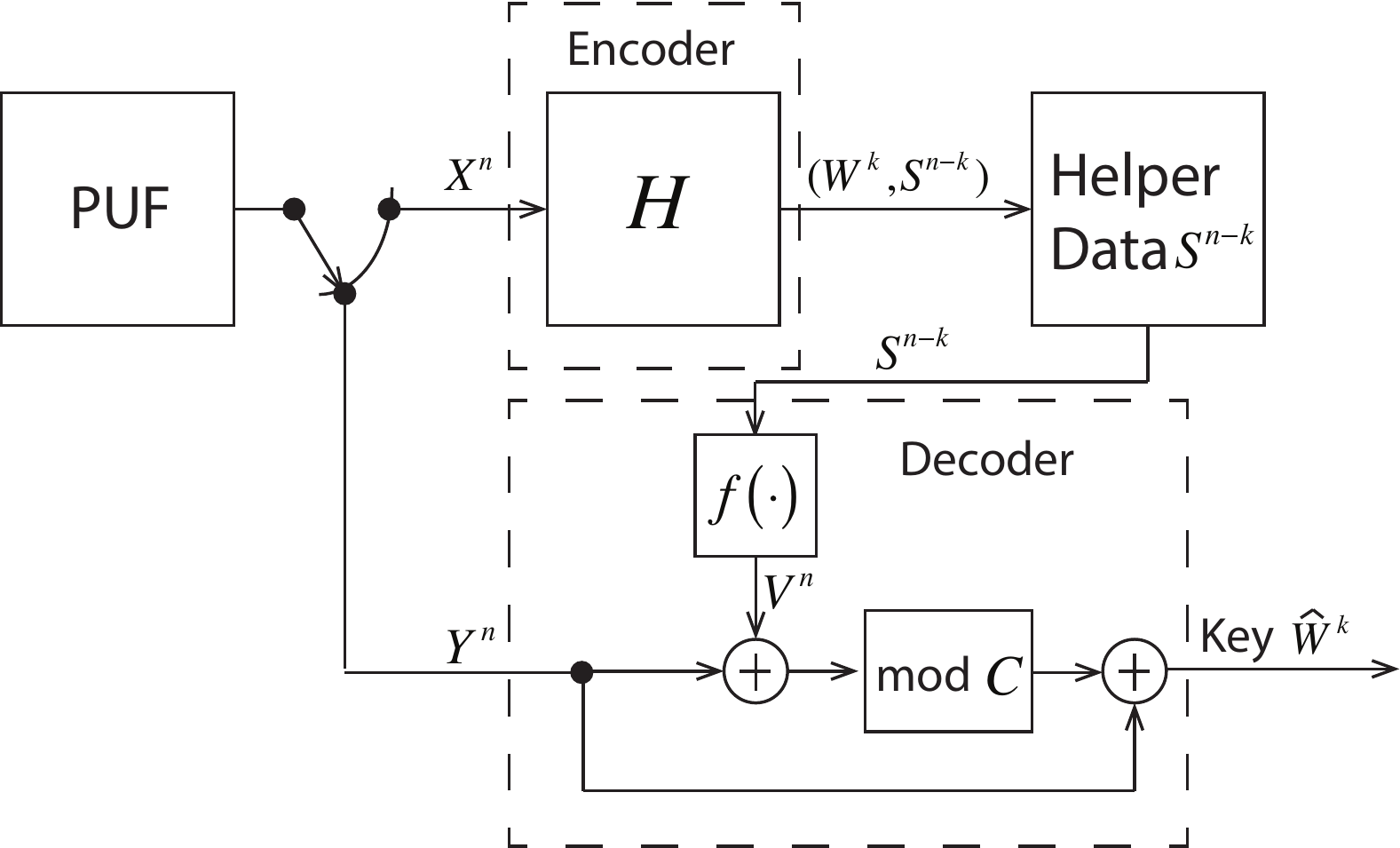}
   \caption{The GS model for PUF key generation system with algebraic binning}
   \label{fig:2}
\end{figure}

In this section, we first state our main theorem and show the optimal key generation rate is achievable with algebraic binning using linear codes. Note that, the result can also be obtained by random binning, but algebraic binning  offers greater insights for PUF key generation system design. Therefore, we choose to use an algebraic binning framework going forward.

\subsection{Algebraic Binning Using Linear Codes}

\begin{theorem}
Given a PUF and an associated key generation rate $R < I(X;Y)$, there exists a linear code such that
\begin{flalign*}
\Pr(\hat{W}^k \neq W^k) \rightarrow 0 \quad \text{as} \ n \rightarrow \infty.
\end{flalign*}
\end{theorem}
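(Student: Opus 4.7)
The approach is to recognize the PUF key generation setup as Slepian--Wolf coding of $X^n$ with decoder-side information $Y^n$: since $Y^n = X^n \oplus Z^n$ with $Z^n$ i.i.d.\ Bernoulli$(p)$, we have $H(X\mid Y)=H_b(p)$ and $I(X;Y)=1-H_b(p)$. I would implement algebraic (syndrome) binning using a capacity-approaching linear code for BSC$(p)$, and show that syndrome decoding of the noise pattern $Z^n$ lets the receiver reproduce $X^n$, and hence the key $W^k$.

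Concretely, pick $k$ with $R\le k/n<1-H_b(p)$. By Shannon's channel coding theorem restricted to linear codes over BSC$(p)$, there exists an $[n,k]$ linear code $C$ whose syndrome (equivalently, ML) decoder recovers a Bernoulli$(p)$ noise vector from its syndrome with probability tending to $1$ as $n\to\infty$. Let $H\in\mathbb{F}_2^{(n-k)\times n}$ be its parity-check matrix and $G\in\mathbb{F}_2^{k\times n}$ a generator matrix with $HG^\top=0$. Extend the rows of $G$ to a basis $\{g_1,\ldots,g_k,e_1,\ldots,e_{n-k}\}$ of $\mathbb{F}_2^n$; then $\{He_1,\ldots,He_{n-k}\}$ is a basis of $\mathbb{F}_2^{n-k}$, and every $X^n$ admits a unique expansion $X^n=\sum_{i=1}^{k}W_i g_i+\sum_{j=1}^{n-k}S'_j e_j$. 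Define the encoder $\phi(X^n)=(S,W)$ with helper $S:=HX^n$ (which is an invertible linear image of $(S'_j)_j$) and key $W:=(W_1,\ldots,W_k)$. The decoder $\psi(Y^n,S)$ forms $HY^n\oplus S=HZ^n$, runs the syndrome decoder of $C$ to obtain $\hat Z^n$, sets $\hat X^n:=Y^n\oplus \hat Z^n$, and reads off $\hat W$ from the basis decomposition of $\hat X^n$.

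Correctness is then immediate: if $\hat Z^n=Z^n$ then $\hat X^n=X^n$ and hence $\hat W=W$, so $\Pr(\hat W^k\ne W^k)\le \Pr(\hat Z^n\ne Z^n)\to 0$ by the choice of $C$. The sole nontrivial ingredient is the existence of a good $[n,k]$ linear code for BSC$(p)$ at rate approaching $1-H_b(p)$, which I expect to be the main obstacle in the sense that it is the only analytic step---the remainder is linear-algebraic bookkeeping of the coset decomposition. If a self-contained argument is preferred over citing Shannon/Elias/Gallager directly, one can obtain this existence by averaging the syndrome-decoding error probability over a uniformly random parity-check matrix $H\in\mathbb{F}_2^{(n-k)\times n}$ and invoking the standard joint-typicality/union-bound argument, which gives vanishing error whenever $(n-k)/n>H_b(p)$, i.e.\ exactly whenever $k/n<I(X;Y)$.
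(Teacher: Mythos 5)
Your proposal is correct and follows essentially the same route as the paper: syndrome (coset) binning with a capacity-approaching linear code for BSC$(p)$, reducing the key-agreement error to the event that syndrome decoding fails to recover the noise pattern $Z^n$ from $HZ^n$, which vanishes for rates below $1-H_b(p)=I(X;Y)$. Your explicit basis decomposition for extracting the $k$-bit key is in fact slightly cleaner bookkeeping than the paper's coset-leader formulation, but it is the same argument.
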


\begin{proof}
The basic idea is to generate the bins as the cosets of a ``good" parity-check code. Let an $(n,k)$ binary parity check code specified by the $(n-k) \times n$ binary parity-check matrix $H$. The code $\mathcal{C}=\{c^n\}$ contains all $n$-length binary vectors $c^n$ whose syndrome $s^{n-k}\triangleq H c^n$ is equal to zero, where here multiplication and addition are modulo 2. Assuming that all rows of $H$ are linearly independent, there are $2^k$ codewords in $\mathcal{C}$, so the code rate is $(\log |\mathcal{C}|)/n = k /n$. Given some general syndrome $s^{n-k} \in \{0,1\}^{n-k}$, the set of all $n$-length vectors $x^n$ satisfying $Hx^n=s^{n-k}$ is called a coset $\mathcal{C}_s$. Define a decoding function $f(s^{n-k})$, where $f:\{0,1\}^{n-k} \rightarrow \{0,1\}^n$, is equal to the vector $v^n \in \mathcal{C}_s$ with the minimum Hamming weight, where ties are broken evenly. It follows from linearity that the coset is a shift of the code $\mathcal{C}$ by the vector $v^n$, i.e.,
\begin{flalign*}
\mathcal{C}_s \triangleq \{x^n: Hx^n=s^{n-k}\}=\{c^n \oplus v^n : c^n \in \mathcal{C}\} \triangleq \mathcal{C}^v,
\end{flalign*}
where the $n$-length vector $v^n=f(s^{n-k})$ is  the {coset leader}.

Decoding of this parity-check code amounts to quantizing $y^n$ to the nearest vector in $\mathcal{C}$ with respect to the Hamming distance. This vector, $\hat{x}^n$, can be computed by syndrome decoding using the function $f$
\begin{flalign}\label{equ:altF}
\hat{x}^n = y^n \oplus \hat{z}^n, \quad \hat{z}^n=f(Hy^n)=y^n \  \text{mod} \ \mathcal{C}.
\end{flalign}
We may view the decoder above as a partition of $\{0,1\}^n$ to $2^{k}$ decision cells of size $2^{n-k}$ each, which are all shifted versions of the basic ``Voronoi'' set
\begin{flalign*}
\{z^n: z^n \oplus f(Hz^n)=0\} \triangleq \Omega_0.
\end{flalign*}
Each of the $2^{n-k}$ members of $\Omega_0$ is a coset leader for a different coset. The enrollment and authentication procedures with this algebraic binning are summarized in Algorithm~\ref{alg:1}. And the corresponding system model is shown in Figure~\ref{fig:2}.

Since our PUF model $Y=X \oplus Z$ is a BSC with crossover probability $p$. We are interested in ``good" parity check codes over BSC$(p)$ that are capacity achieving, i.e., they have a rate $R$ arbitrarily close to  $1-H(p)$ for $n$ large enough. Note that $H(X|Y)=H(p)$ and $H(X)=1$, together with Algorithm~\ref{alg:1} will grant us the desired result. 




\begin{algorithm}                               
\caption{Algebraic Binning}          
\label{alg:1}                                    
  \begin{algorithmic} 
 	  \Procedure{Enrollment}{$x^n$}
	  	\State Store the syndrome $s^{n-k}= Hx^n$ as helper data.
	  \EndProcedure
	  \Procedure{Authentication}{$y^n, s^{n-k}$}
	  	\State Find the coset leader $v^n=f(s^{n-k})$.
		\State Find the error between $y^n$ and the coset $\mathcal{C}_s$ 
		\State $\hat{z}^n=(v^n \oplus y^n) \ \text{mod} \ \mathcal{C}$.
		\State Reconstruct the key $\hat{w}^k = y^n \oplus \hat{z}^n \oplus v^n$. 
	  \EndProcedure
  \end{algorithmic}
\end{algorithm}


First, we show for any enrollment sequence $x^n$ and the corresponding PUF output $y^n$, the block error probability is vanishingly small. Note that the decoding computation in Algorithm~\ref{alg:1} is unique, so unlike in random binning we never have ambiguous decoding. Hence, noting from the PUF model $z^n=x^n \oplus y^n$ and from (\ref{equ:altF}) that $\hat{z}^n=f(H(x^n\oplus y^n))=f(Hz^n)$, a decoding error event amounts to $\{\hat{w}^k \neq w^k\} \leftrightarrow \{\hat{x}^n \neq x^n\} \leftrightarrow \{\hat{z}^n \neq z^n\}$ so the probability of decoding error is 
\begin{flalign*}
\Pr\{\hat{W}^k\neq W^k\}=\Pr\{\hat{X}^n\neq X^n\}=\Pr\{f(HZ^n) \neq Z^n\}
\end{flalign*}
which by good BSC$(p)$ code is smaller than $\epsilon$. 

Next, we show the optimal rate is $I(X;Y)$. Because the total number of typical sequences are $2^{nH(X)}$, maximizing the key generation rate $R$ is equivalent to minimize the number of bins (cosets) 
\begin{flalign}\label{equ:rate}
\max R = \max \ \frac{k}{n} = \max \ 1-\frac{n-k}{n}. 
\end{flalign}

Here we need the following Slepian-Wolf bound for distributed source coding. 
\begin{theorem} [Slepian-Wolf \cite{ThomasCoverBook}] 
For the distributed source coding problem for the source $(X,Y)$ drawn i.i.d.$\sim p(x,y)$, the achievable rate region is given by 
\begin{flalign*}
R_1 &\ge H(X|Y), \\
R_2 &\ge H(Y|X), \\
R_1+R_2 &\ge H(X,Y).
\end{flalign*}
\end{theorem}
To establish the connection between GS model and Slepian-Wolf problem, we see the two PUF outputs $X$ and $Y$ are the correlated source for Slepian-Wolf problem, and the number of bins in GS model is equivalent to the rate of the first source in Slepian-Wolf problem
\begin{flalign}\label{equ:SW}
\frac{n-k}{n}=R_1 \ge H(X|Y).
\end{flalign}
Combing Equ.(\ref{equ:rate}) and (\ref{equ:SW}), we have the optimal rate as $1-H(X|Y)=H(X)-H(X|Y)=I(X;Y)$.
The optimality is guaranteed by the Slepian-Wolf bound.
\end{proof}
\begin{remark}
The proof shows the key generation rate $I(X;Y)$ is achievable with a ``good'' coset partition, in a sense that each coset is a ``good'' parity check code over BSC($p$). It is a general statement, as long as one can find the ``good'' coset partition with each coset a ``good'' parity check code for some channel, the key generation rate $I(X;Y)$ is achievable for that channel.
\end{remark}


\subsection{Polar Codes for PUFs}

Polar codes are popular  linear block codes, introduced by Arikan in \cite{Arikan}. A binary polar code can be specified by $(N,K,\mathcal{F}, u^{\mathcal{F}})$, where $N=2^n$ is the block length, $K$ is the number of information bits encoded per codeword, $\mathcal{F}$ is the set of indices of the $N-K$ frozen bits and $u^{\mathcal{F}}$ is a vector of frozen bits, which is known to both encoder and decoder.

\subsubsection{Encoding of Polar Codes}
For an $(N,K,\mathcal{F}, u^{\mathcal{F}})$ polar code, the encoding operation for a message vector $u^N$, is performed using a generator matrix,
\begin{flalign*}
G_N=B_N G_2^{\otimes \log N},
\end{flalign*}
where $B_N$ is a {\em bit-reversal} permutation matrix, $G_2=\begin{bmatrix} 1 & 0 \\ 1 & 1 \end{bmatrix}$ and $\otimes$ denotes the Kronecker product. 

Given a message vector $u^N$, the codewords are generated as 
\begin{flalign*}
x^N=u^{\mathcal{F}^c} (G_N)_{\mathcal{F}^c} \oplus u^{\mathcal{F}} (G_N)_{\mathcal{F}}, 
\end{flalign*}
where $\mathcal{F}^c \triangleq \{1,2,\ldots, N\} \backslash \mathcal{F}$ corresponds to the information bits indices. So $u^{\mathcal{F}^c}$ are the information bits and $u^{\mathcal{F}}$ are the frozen bits. 

\subsubsection{Decoding of Polar Codes}
Polar codes achieve the channel capacity asymptotically in code length, when decoding is done using the successive-cancellation (SC) decoding algorithm. The SC decoder observes $(y^N, u^{\mathcal{F}})$ and generates an estimate $\hat{u}^N$ of $u^N$. The $i$th bit of the estimate $\hat{u}^N$ depends on the channel output $y^N$ and the previous bit decisions $\hat{u}_1, \hat{u}_2, \ldots, \hat{u}_{i-1}$, denoted by $\hat{u}^{i-1}$. It uses the following decision rules,
\begin{flalign*}
\hat{u}_i=\left\{ \begin{array}{ll}
 u_i & \textrm{if $i \in \mathcal{F}$}\\
 0 & \textrm{if $i \in \mathcal{F}^c$ and $L^{(i)}(y^N, \hat{u}^{i-1}) \ge 1$}\\
 l & \textrm{if $i \in \mathcal{F}^c$ and $L^{(i)}(y^N, \hat{u}^{i-1}) < 1$}
 \end{array} \right.
\end{flalign*}
where $L^{(i)}_N(y^N,\hat{u}^{i-1})=\mathbb{P}(y^N,\hat{u}^{i-1}|0)/\mathbb{P}(y^N,\hat{u}^{i-1}|1)$ is the $i$th likelihood ratio (LR) at length $N$. We omit further details in SC decoding for limited space, readers can get the full knowledge of SC decoding in \cite{Arikan}.

\subsection{Applying Polar Codes to PUFs} \label{sec:pcc}
Given the PUF model as a BSC($p$), block length $N$ and rate $K/N$, we have the polar code with parameters $(N,K,\mathcal{F})$. And the algebraic binning with polar code is shown in Algorithm~\ref{alg:2}.
\begin{algorithm}                               
\caption{Algebraic Binning with Polar Code Construction}          
\label{alg:2}                                    
  \begin{algorithmic} 
 	  \Procedure{Enrollment}{$x^N, (N,K,\mathcal{F})$}
	  	\State Store the syndrome $s^{N-K} = ((G_N^T)^{-1}x^N)_{\mathcal{F}}$ as helper data.
	  \EndProcedure
	  \Procedure{Authentication}{$y^N, s^{N-K}, (N,K,\mathcal{F})$}
		\State Reconstruct the key $\hat{w}^K = SCdec(y^n, s^{N-K})$. 
	  \EndProcedure
  \end{algorithmic}
\end{algorithm}


\begin{theorem}
For PUF, every key generation rate $R < I(X;Y)$, there exist a polar encoder and decoder, such that
\begin{flalign}\label{equ:err}
\sum_{s^{N-K} \in \mathcal{X}^{N-K}} \frac{1}{2^{N-K}} \sum_{w^K \in \mathcal{X}^K} \frac{1}{2^K}\Pr(\hat{w}^K \neq w^K) = O(N^{-\frac{1}{4}}).
\end{flalign}
\end{theorem}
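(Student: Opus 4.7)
The plan is to identify Algorithm~\ref{alg:2} with Arikan's polar coding scheme for the channel BSC($p$) so that the claim reduces to his original block-error estimate for successive-cancellation decoding. Under the linear bijection $u^N = (G_N^T)^{-1} x^N$, the uniform distribution on $X^N$ induces the uniform distribution on $U^N$; the enrollment step simply reads off the subvector $u_{\mathcal{F}}$ as the syndrome $s^{N-K}$, while the remaining coordinates $u_{\mathcal{F}^c}$ are identified with the key $w^K$. Since $Y^N = X^N \oplus Z^N$ with $Z^N$ i.i.d.~$\text{Bernoulli}(p)$, the joint law of $(U^N, Y^N)$ is precisely what Arikan's SC decoder processes when a standard polar code is transmitted over BSC($p$) with frozen block $u_{\mathcal{F}}$.

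Given this identification, I would first choose the frozen set $\mathcal{F}$, with $|\mathcal{F}| = N-K$, to consist of the indices whose polarized bit channels carry the largest Bhattacharyya parameters $Z(W_N^{(i)})$. The polarization theorem guarantees that for any $K/N < I(X;Y) = 1 - H(p)$ and $N$ large enough one has $\sum_{i \in \mathcal{F}^c} Z(W_N^{(i)}) = O(N^{-1/4})$, and Arikan's standard bound states that this same sum upper bounds the SC block-error probability for every fixed choice of $(u_{\mathcal{F}}, w^K)$. Next, I will invoke the output-symmetry of BSC($p$), which is preserved by the polar transform at every synthesized subchannel: the error event then depends on $(U^N, Y^N)$ only through the noise realization $Z^N$, so the error probability is independent of $(u_{\mathcal{F}}, w^K)$. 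Consequently the double average in~(\ref{equ:err}) coincides with the pointwise bound, yielding $O(N^{-1/4})$.

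The main technical obstacle is the bookkeeping that ties the particular syndrome convention of Algorithm~\ref{alg:2}, namely $s^{N-K} = ((G_N^T)^{-1} x^N)_{\mathcal{F}}$, to the standard polar encoding identity $x^N = u^N G_N$, and that verifies $\text{SCdec}(y^N, s^{N-K})$ is literally the usual successive-cancellation decoder with the frozen positions pre-filled by $s^{N-K}$. Once these algebraic identifications are checked, the rate condition $R < I(X;Y)$, the $O(N^{-1/4})$ decay, and the input-independence of the error event follow directly from Arikan's original analysis on the symmetric BSC and no further work is required.
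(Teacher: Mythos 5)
Your proposal follows essentially the same route as the paper: identify the enrollment and authentication maps of Algorithm~2 with Arikan's coset (nonzero frozen vector) polar codes over BSC($p$) via the bijection $u^N=(G_N^T)^{-1}x^N$, note that uniform $X^N$ induces uniform $(s^{N-K},w^K)$, and invoke the $O(N^{-1/4})$ block-error bound from the rate of polarization. The only divergence is your extra appeal to BSC output-symmetry to make the error probability independent of the frozen vector; this is not needed for the stated double-average bound (the paper simply cites Arikan's Theorem~3, which is already an average over frozen vectors and messages), and note that "Arikan's standard bound" holds per fixed $(u_{\mathcal{F}},w^K)$ only \emph{because} of that symmetry --- if carried out carefully it would actually strengthen the conclusion to a per-coset guarantee, something the paper explicitly declines to claim.
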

\begin{proof}
As introduced in \cite{Arikan}, polar code can be represented as 
\begin{flalign*}
x^N&=u^{\mathcal{F}^c} (G_N)_{\mathcal{F}^c} \oplus u^{\mathcal{F}} (G_N)_{\mathcal{F}} \\
&=w^K (G_N)_{\mathcal{F}^c} \oplus s^{N-K} (G_N)_{\mathcal{F}}.
\end{flalign*}
By inversion of $G_N$, we have the syndrome and the key as
\begin{flalign*}
s^{N-K} &= ((G_N^T)^{-1}x^N)_{\mathcal{F}}, \\
w^K &= ((G_N^T)^{-1}x^N)_{\mathcal{F}^{c}}.
\end{flalign*}
Now for each PUF observation $x^N$, we treat it as a codeword of polar code with parameters $(N,K,\mathcal{F}, s^{N-K})$. So we use a set of polar code $\mathcal{C}=\{C(N,K,\mathcal{F},s^{N-K}):s^{N-K}\in \mathbb{F}_2^{N-K}\}$. Because $G_N$ has full rank, for any $x^N \in \mathbb{F}_2^{N}$, $x^N \in \mathcal{C}$, and $C \subseteq \mathbb{F}_2^{N}$. So $\mathcal{C}=\mathbb{F}_2^{N}$. We proved that $C$ is a coset partition of $\mathbb{F}_2^{N}$ and each coset code $C(N,K,\mathcal{F},s^{N-K})$ with coset leader $s^{N-K}$ is a channel code for the channel. According to the Theorem 3 in \cite{Arikan}, we have for rate $R < I(X;Y)$, the block error probability for polar coding under successive cancellation decoding satisfies 
\begin{flalign*}
\sum_{s^{N-K} \in \mathbb{F}_2^{N-K}} \frac{1}{2^{N-K}} \sum_{w^K \in \mathbb{F}_2^K} \frac{1}{2^K}\Pr(\hat{w}^K \neq w^K) = O(N^{-\frac{1}{4}}).
\end{flalign*}
Although polar codes cannot guarantee each coset code is a good channel code such that 
\begin{flalign*}
\Pr(\hat{W}^K \neq W^K)=o(1),
\end{flalign*}
on average, we obtain a good coset partition as required by Theorem 1.
\end{proof}


\subsection{Achievable Scheme for Unquantized PUFs: The Gaussian Case}

As mentioned earlier, a vast majority of PUF outputs are quantized to a binary alphabet right after generation. However, for the unquantized PUF model $\tilde{Y} = \tilde{X} + \tilde{Z}$ (again, we abuse the notations such that $\tilde{Y},\tilde{X}$ are outputs of the PUF), we use a lattice based coding scheme as below.

\emph{Definitions}: Lattice $\Lambda$ is a discrete subgroup of $\mathbb{R}^n$. Quantization with respect to $\Lambda$ is $Q_{\Lambda}(\tilde{x}^n)=\arg \min_{\lambda \in \Lambda} \|\tilde{x}^n-\lambda\|$. Fundamental Voronoi region of $\Lambda$ is $\mathcal{V}(\Lambda)=\{\tilde{x}^n: Q_{\Lambda}(\tilde{x}^n)=\mathbf{0}\}$. Volume of the Voronoi region of $\Lambda$ is $V(\Lambda)=\int_{\mathcal{V}(\Lambda)} d\tilde{x}^n$. Normalized second moment of $\Lambda$ is $G(\Lambda)=\frac{\sigma^2(\Lambda)}{V(\Lambda)^{2/n}}$ where $\sigma^2(\Lambda)=\frac{1}{nV(\Lambda)}\int_{\mathcal{V}(\Lambda)} \|\tilde{x}^n\|^2 d\tilde{x}^n$. A pair of Lattices $(\Lambda,\Lambda_0)$ are said to be nested if $\Lambda\subseteq\Lambda_0$.

 We use nested lattices $\Lambda \subseteq \Lambda_0$ for coset partitioning and algebraic binning. The encoder block with input $\tilde{x}^n$ and output $\tilde{d}^n$ is implemented by lattice modulo operation
\[\tilde{d}^n=[\tilde{x}^n]\textrm{ mod }\Lambda_0=\tilde{x}^n-Q_{\Lambda_0}(\tilde{x}^n).\]
We use $\tilde{d}^n$ as a helper data. As in Figure~\ref{fig:2}, for the decoder block with input $\tilde{y}^n$, helper data $\tilde{d}^n$, and output $\tilde{v}^n$, we perform
\[\tilde{v}^n = Q_{\Lambda_0}([\tilde{y}^n-\tilde{d}^n]\textrm{ mod }\Lambda).\]
Since
\[\tilde{y}^n = \tilde{x}^n + \tilde{z}^n = \tilde{t}^n + \tilde{d}^n + \tilde{z}^n\]
where $\tilde{t}^n=Q_{\Lambda_0}(\tilde{x}^n)\in\Lambda_0$, the decoder output is
\[\tilde{v}^n=Q_{\Lambda_0}([\tilde{t}^n+\tilde{z}^n]\textrm{ mod }\Lambda).\]
If we use nested lattices satisfying $\tilde{z}^n\in\mathcal{V}(\Lambda_0)$ with high probability, it follows that
\[Q_{\Lambda_0}([\tilde{t}^n+\tilde{z}^n]\textrm{ mod }\Lambda)=Q_{\Lambda_0}(\tilde{t}^n)\]
with high probability. Since
\[Q_{\Lambda_0}(\tilde{t}^n)=\tilde{t}^n=Q_{\Lambda_0}(\tilde{x}^n),\]
it also means that
\[Q_{\Lambda_0}([\tilde{y}^n-\tilde{d}^n]\textrm{ mod }\Lambda)=Q_{\Lambda_0}(\tilde{x}^n)\]
with high probability. In other words, the helper data cancels the effect of noise $\tilde{z}^n$.

The lattice codebook is defined by the set $\Lambda_0\cap\mathcal{V}(\Lambda)$.
The code rate is given by $R=\frac{1}{n}\log\left(\frac{V(\Lambda)}{V(\Lambda_0)}\right)$ where $V(\cdot)$ is the volume of the fundamental Voronoi region of a lattice. We use nested lattices with parameters $\sigma^2(\Lambda)=P$, $G(\Lambda)=\frac{1}{2\pi e}$, and $V(\Lambda)=(2\pi e P)^{\frac{n}{2}}$. Nested lattices good for Gaussian channel coding \cite{ErezZamir04} can be used to achieve a rate up to $R=\frac{1}{2}\log\left(\frac{P}{N}\right)$ with vanishing error probability. In practice, polar lattices \cite{YanLingWu13,YanLiuLing14} can be used for polynomial-time processing.


\section{Comparisons with Existing Methods} \label{sec:comp}
There are several existing method proposed for the GS model. 

The authors et al. \cite{WZ} establish the connection between Wyner-Ziv problem and the GS model. They describe the key-leakage-storage region for GS model. However, for GS model, according to the definitions, storage rate and privacy rate are the same since $I(X^n;W)=H(W)-H(W|X^n)=H(W)$, where $W$ is a function of $X^n$ in GS model. It is also reflected in Theorem 1 in \cite{WZ} as $R_l$ and $R_w$ have the same bound. So the key-leakage-privacy region can be treated as key-leakage region or key-storage region. We describe the optimal point of the key-storage region by the algebraic binning argument. The authors et al. \cite{WZ} also show a polar code construction based on the nested polar code in \cite{nestedPC} to achieve the key-leakage-storage region, which give the maximual key generation rate as $1-H_b(q*p)$ for given PUF noise as a BSC($p$), where $q \in [0, 0.5]$ is a chosen parameter for the first step {\em vector quantization} (VQ) in the nested polar code. Since $H_b(q*p) \ge H_b(p)$, we have our optimal rate greater than their optimal rate $1-H_b(p) \ge 1- H_b(q*p)$, and the storage $H_b(p) \le H_b(q*p)$. Notice that the both equalities can be achieved if $q=0$, but at this point they lose the nested polar code construction. The reason for the degradation in their result is that there exists a gap between Wyner-Ziv problem having distortion (reflected as the first step VQ in the nested polar code construction) and the GS model requiring an exact recovery of the key. So the VQ step introducing the distortion is not necessary for GS model. In all, we offer a better rate with a simpler implementation.

The authors et al. \cite{LDPC} offer an LDPC based scheme for PUF. Their scheme does not optimize the key generation rate since the LDPC does not necessarily form a coset partition. 


\section{Simulation Results} \label{sec:sim}
We simulate the system in Figure \ref{fig:2} with the polar code construction in Section \ref{sec:pcc} with MATLAB. If we use PUFs in a field programmable gate array (FPGA) as the randomness source, we must satisfy a block error probability $P_B$ of at most $10^{-6}$ \cite{FPGA}. Consider a BSC($p$) with crossover probability $p=0.15$, which is a common value for SRAM PUFs.

First, we consider the block length $N=1024$ and we design polar code with rate $K/N=128/1024=0.125$ for the BSC($p$) channel. We evaluate the block error probability performance of this code with SC decoder and SC list (SCL) decoder with list size 8 respectively for a BSC with a range of crossover probability, as shown in Figure.~\ref{fig:test}. It shows the SCL decoder has better performance, and achieves a block error probability of $P_B=10^{-6}$ at a crossover probability $0.2$. For comparison, we achieve the key generation rate $0.125$ with crossover probability and block error probability $(0.2, 10^{-6})$, better than the crossover probability and block error probability tuple $(0.1819, 10^{-6})$ in \cite{WZ}. 
\begin{figure}[!hbp] 
  \centering
    \includegraphics[width=0.5\textwidth]{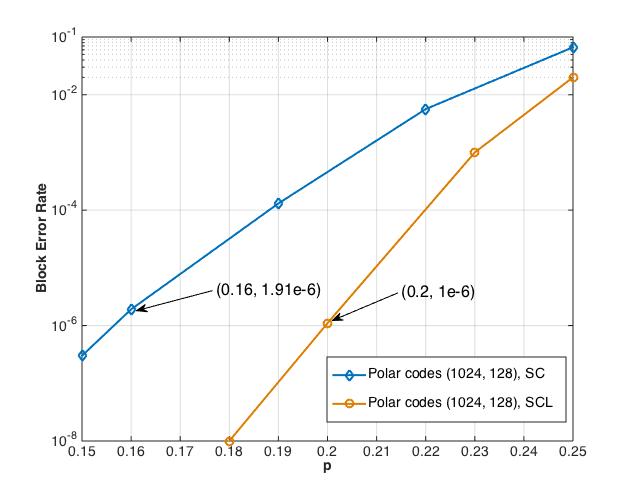}
   \caption{Block error probability over a BSC($p$) for polar codes with SC decoder and SCL decoder, respectively.}
   \label{fig:test}
\end{figure}
\section{Conclusion}\label{sec:con}
By algebraic binning, we show that ``good'' coset partition is needed to achieve the optimal key generation rate for PUFs. Thus we offer a principle in general for PUF key generation system design. And we design a polar code-based system for PUFs that achieve better key generation rate than existing methods. In future work, we will further study the ``good'' code for unquantized PUFs.

\section*{Acknowledgment}

This work was supported by the NSF and the ONR.




\end{document}